\newcommand{\bs}[1]{\boldsymbol{#1}}
\theoremstyle{plain}
\newtheorem{theorem}{Theorem}[section]
\newtheorem{corollary}[theorem]{Corollary}
\theoremstyle{definition}
\theoremstyle{remark}
\newtheorem{remark}[theorem]{Remark}
\icmltitlerunning{Simplicity is Key}
\begin{document}

\begin{acronym}
\acro{UE}{user equipment}
\acro{BS}{base station}
\acro{CIR}{channel impulse response}
\acro{CSI}{channel state information}
\acro{DOA}{direction-of-arrival}
\acro{TOA}{time-of-arrival}
\acro{OMP}{orthogonal matching pursuit}
\acro{LASSO}{least absolute shrinkage and selection operator}
\acro{SAE}{sparse autoencoder}
\acro{FOCUSS}{focal underdetermined system solver}
\acro{DL}{deep learning}
\acro{CS}{compressed sensing}
\acro{CC}{channel charting}
\acro{SWiT}{self-supervised wireless transformer}
\acro{LoS}{line-of-sight}
\acro{nLoS}{non-line-of-sight}
\acro{MPC}{multipath component}
\acro{FP}{fingerprinting}
\acro{CE90}{90th percentile of the cumulative error}
\acro{MAE}{mean absolute error}
\acro{WiT}{wireless transformer}
\acro{SSL}{self-supervised learning}
\acro{TF}{transformer}
\end{acronym}

\twocolumn[
  \icmltitle{Simplicity is Key: An Unsupervised Pretraining Approach for Sparse Radio Channels}



  \icmlsetsymbol{equal}{*}

  \begin{icmlauthorlist}
    \icmlauthor{Jonathan Ott}{fh}
    \icmlauthor{Alexander Mattick}{fh}
    \icmlauthor{Maximilian Stahlke}{fh}
    \icmlauthor{Tobias Feigl}{fh,fau}
    \icmlauthor{Bjoern Eskofier}{fh}
    \icmlauthor{Christopher Mutschler}{fh,fau}
  \end{icmlauthorlist}

  \icmlaffiliation{fh}{Fraunhofer Institute for Integrated Circuits IIS, Fraunhofer IIS, Nuremberg, Germany}
  \icmlaffiliation{fau}{FAU Erlangen-Nürnberg, Erlangen, Germany}

  \icmlcorrespondingauthor{Jonathan Ott}{jonathan.ott@iis.fraunhofer.de}

  \icmlkeywords{Machine Learning, ICML}

  \vskip 0.3in
]



\printAffiliationsAndNotice{}  

\begin{abstract}

Unsupervised representation learning for wireless \acf{CSI} reduces reliance on labeled data, thereby lowering annotation costs, and often improves performance on downstream tasks.
However, state-of-the-art approaches take little or no account of domain-specific knowledge, forcing the model to learn well-known concepts solely from data.
We introduce \textbf{Spa}rse pretrained \textbf{R}adio \textbf{Tran}sformer (SpaRTran), a hybrid method based on the concept of compressed sensing for wireless channels. 
In contrast to existing work, SpaRTran builds around a wireless channel model that constrains the optimization procedure to physically meaningful solutions and induces a strong inductive bias. 
Compared to the state of the art, SpaRTran cuts positioning error by up to 28\% and increases top-1 codebook selection accuracy for beamforming by 26 percentage points. 
Our results show that capturing the sparse nature of radio propagation as an unsupervised learning objective improves performance for network optimization and radio-localization tasks.
\end{abstract}

\section{Introduction}

\begin{figure}
    \centering
    \includegraphics[width=0.7\linewidth, trim={3cm 6.5cm 0.0cm 7.5cm}, clip]{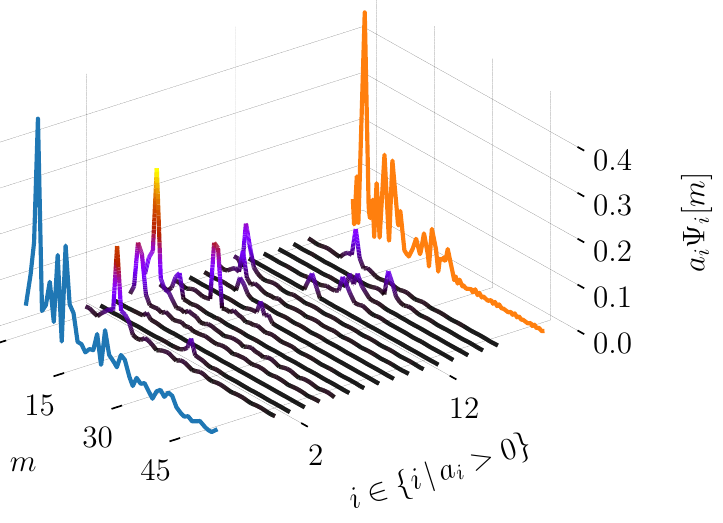}
    \caption{Example of learned sparse \ac{CSI} decomposition, our unsupervised objective. The input signal (blue) is decomposed into a linear combination of basis functions (black) and subsequently recovered (orange).}
    \label{fig:decomp}
\end{figure}

Deep learning for wireless communication and sensing networks is an active research area and achieves outstanding performance in tasks such as network optimization and localization~\citep{zhang_deep_2019}. 
However, due to limited data availability, these approaches often tend to overfit the training dataset, leading to poor generalization capabilities~\citep{simeone_very_2018}. The reason is the unconstrained optimization in the large parameter space, which results in suboptimal solutions~\citep{vapnik_nature_2013}. 
Unsupervised representation learning has gained significant attention in domains such as natural language processing~\citep{devlin_bert_2019, radford_improving_2018} and computer vision~\citep{grill_bootstrap_2020, caron_emerging_2021, he_momentum_2020, chen_simple_2020}, often requiring fewer labeled samples for finetuning by making use of large amounts of unlabeled data in advance. 
Intuitively, unsupervised training acts as a regularizer that biases the network parameters towards regions of the parameter space whose local minima exhibit better generalization performance~\citep{erhan_why_2010}.
Consequently, recent works successfully adapted the paradigm to greatly reduce labeling effort for wireless tasks without degrading performance~\citep{ott_radio_2024, salihu_self-supervised_2024, alikhani_large_2024, guler_multi-task_2025}.

While existing unsupervised methods for wireless applications have proven effective, they largely disregard well-known physical models of the wireless channel~\citep{ott_radio_2024, alikhani_large_2024} or incorporate it implicitly via data augmentations~\citep{salihu_self-supervised_2024, guler_multi-task_2025}. 
Wireless systems are well suited for hybrid model- and data-driven approaches as there exists extensive domain-specific knowledge in the form of analytical approximations, yet the wireless channel includes complex stochastic processes which are infeasible to capture with mathematical models~\cite{oshea_approximating_2019}.
Incorporating physical models in deep learning procedures improves generalization capabilities and reduces the dependency on large datasets~\citep{shlezinger_model-based_2023}. 
In this work, we show that explicitly incorporating a sparse wireless channel model into the unsupervised pretraining process improves downstream task performance.
In other words, we introduce an additional bias that aids the pretraining process to find better solutions in parameter space.  
We draw inspiration from the \ac{CS} framework for the architecture design and the pretraining objective.  
The central premise of \ac{CS} is that sufficiently sparse representations reduce ambiguity, while in contrast, non-sparse representations typically contain numerous insignificant components, complicating both analysis and signal recovery~\citep{donoho_uncertainty_2001, candes_robust_2006}.

Our method, Sparse pretrained Radio Transformer (SpaRTran), is a physics-informed representation learning approach for \ac{CSI} measurements. 
To match the sparse wireless channel model our architecture resembles partly a gated~\ac{SAE}~\citep{rajamanoharan_improving_2024, cunningham_sparse_2023}.
However, unlike prior work considering~\ac{SAE}s, our goal is not to learn interpretable features but to aid the encoder towards a simpler solution by finding a sparse signal decomposition (see Fig.~\ref{fig:decomp}).
In contrast to the state-of-the-art, SpaRTran uses single-links between antenna pairs instead of accumulating all available links of a system into a single datapoint. While sacrificing the ability to learn spatial characteristics between antennas in an unsupervised manner, this design choice brings an important advantage: The learned representations are system agnostic, i.e., they do not depend on the number of antennas or system topologies. This is an important step towards larger, more general wireless basis models that act on many different systems.  

\section{Related Work}%
\label{sec:related_work}%

SpaRTran is an unsupervised representation learning method for pretraining on wireless signals that integrates techniques from compressed sensing and dictionary learning. 

\textbf{Unsupervised representation learning for \ac{CSI}.} 
Supervised deep learning has advanced the state-of-the-art in terms of accuracy in tasks such as wireless positioning~\citep{salihu_attention_2022, liu_mhsa-ec_2022, zhang_csi-fingerprinting_2023} and beam-management~\citep{ma_deep_2023}.
Recently, unsupervised learning and \ac{SSL} has attracted significant attention in this context aiming to reduce the dependence on costly labels. 
The idea is to leverage cost-effective unlabeled channel measurements to pretrain a task-agnostic basis model also known as foundation model.
Existing works transfer the pretraining objectives that have been established in domains such as computer vision or natural language processing to the wireless domain.
Here, contrastive~\citep{salihu_self-supervised_2024} and predictive~\cite{alikhani_large_2024, catak_bert4mimo_2025, ott_radio_2024, yang_wirelessgpt_2025} methods have been studied as well as their combination~\citep{pan_large_2025, guler_multi-task_2025}. 
These approaches can be considered purely data-driven, learning the underlying structure of the wireless channel directly from measurements.  
To the best of our knowledge, hybrid model- and data-driven approaches have not been examined in this context so far.

\textbf{Compressed sensing} represents signals by a high-dimensional sparse vector in an overcomplete basis, assuming they arise from few latent factors.
\ac{CS} is widely used in wireless systems for source separation~\citep{donoho_compressed_2006, candes_robust_2006}, \ac{DOA} estimation~\citep{yang_sparse_2018}, and channel estimation~\citep{berger_application_2010}.
Classical recovery methods include convex relaxation~\citep{chen_atomic_2001, tibshirani_regression_1996}, greedy pursuit~\citep{tropp_signal_2007}, and sparse Bayesian learning~\citep{malioutov_sparse_2005, stoica_spice_2011}.
Recent deep-learning approaches improve reconstruction fidelity while reducing complexity~\citep{machidon_deep_2023}.
Separately, \acp{SAE}~\citep{cunningham_sparse_2023, bricken2023} enforce sparsity in high-dimensional latent spaces; \citet{rajamanoharan_improving_2024} further decouple component selection from coefficient estimation to avoid shrinking  of coefficients.
We reinterpret the compressive optimization objective as a pretext task for unsupervised generative modeling, drawing on concepts from convex relaxation and \acp{SAE} to incorporate domain-specific knowledge into both, architecture and optimization.

\textbf{Dictionary learning} algorithms identify atomic features that sparsely represent underlying data, i.e., the dictionary is learned empirically from the signals themselves. 
This enables generalization across signal types and often leads to increased sparsity~\citep{elad_sparse_2010}. 
A prominent example is the K-SVD algorithm~\citep{aharon_k-svd_2006} which iteratively updates the dictionary atoms. 
SpaRTran can jointly learn the dictionary, increasing flexibility across waveforms and spatio-temporal patterns in cases where the theoretical dictionary may not approximate the signals sufficiently.

\section{Problem Description}%
\label{sec:problem_description}%



During a radio signal transmission, the electromagnetic wave interacts with the environment, i.e., the channel, which affects the signal, resulting in multiple propagation paths arriving at the receiver. 
The received signal $y(t)$ can be defined as
$y(t) = h(t)*s(t) + w(t)$,
where $s(t)$ is the transmitted signal, $h(t)$
the channel, $w(t)$ additive white Gaussian noise, and $*$ the convolution operator.
The \ac{CIR} $h(t)$ characterizes the radio transmission channel and can be modeled as
\begin{equation}
h(t) = \sum\limits_{k=0}^{K-1} \alpha_k e^{-i\varphi_k}\delta(t - \tau_k),
\label{eq:cir}
\end{equation}
where $\tau_k$ is the signal transmission delay, $\alpha_k$ the magnitude and $\varphi_k$ the phase of the $k$-th propagation path of the transmitted signal.
$\delta$ denotes the Dirac delta function and $i$ the imaginary unit.
Eq.~\ref{eq:cir} describes the superposition of several signals, originating from $K$ far field sources. 
In practice, $K$ is assumed to be unknown. 
The bandwidth-limited discrete channel measurement is modeled as
\begin{equation}
h[m] = \sum\limits_{k=0}^{K-1} a_k \text{sinc}[m - \tau_k W] + w_n,
\label{eq:disc_channel}
\end{equation}
where $W$ is the bandwidth of the system, $a_k$ are the complex-valued path coefficients, and $m \in \{1, \cdots, M\}$.
From this, we derive the sparse channel representation. 
Assuming a set of $N$ potential signals $\bs \psi_l \in \mathbb R^M$ that form a basis, of which only $K \ll N$ effectively contribute to the received signal, we can rewrite Eq.~\ref{eq:disc_channel} as  
\begin{equation}
\bs{h} = \sum\limits_{i=0}^{N-1} a_i \bs{\psi}_i + \bs{w},
\label{eq:channel_vector}
\end{equation}
where $\rvert \alpha_i \lvert > 0$ if the $i$-th signal is an active signal component, and $\rvert \alpha_i \lvert = 0$ otherwise.
Note that we have replaced the $\text{sinc}$-function with a more generic notation $\bs\psi_i$. 
Defining the dictionary $\bs\Psi := [\bs \psi_0, \cdots, \bs \psi_{N-1}]$ allows (\ref{eq:channel_vector}) to be expressed more concise in matrix notation as 
\begin{equation}
\boldsymbol{h} = \boldsymbol{\Psi a + w},
\label{eq:sparse_model}
\end{equation}
where $\bs a = [a_0, \cdots, a_{N-1}]^T$ is the sparse coefficient vector, and $\bs\Psi$ is a $M \times N$ dictionary matrix.
Eq.~\ref{eq:sparse_model} describes an underdetermined system of equations. 
As there is no unique solution, recovering the sparse channel requires solving the following optimization problem:
\begin{equation}
\min \lVert \bs{a} \rVert_0, \ \hspace{1em}\text{\textup{s.t.}} \ \rVert\boldsymbol{\Psi a} - \bs h\lVert_2 \ \leq \epsilon, 
\label{eq:convex_opt}
\end{equation}
where $\epsilon$ denotes the allowed reconstruction error due to noise.
Eqs.~(\ref{eq:sparse_model}) and (\ref{eq:convex_opt}) together describe the radio channel within the framework of compressed sensing~\citep{donoho_compressed_2006, candes_robust_2006}.

\section{SpaRTran Training Pipeline}%
\label{sec:methodology}%

\begin{figure*}
    \centering
    \includegraphics[width=1.0\textwidth, trim={1.8cm 0 1.8cm 0}]{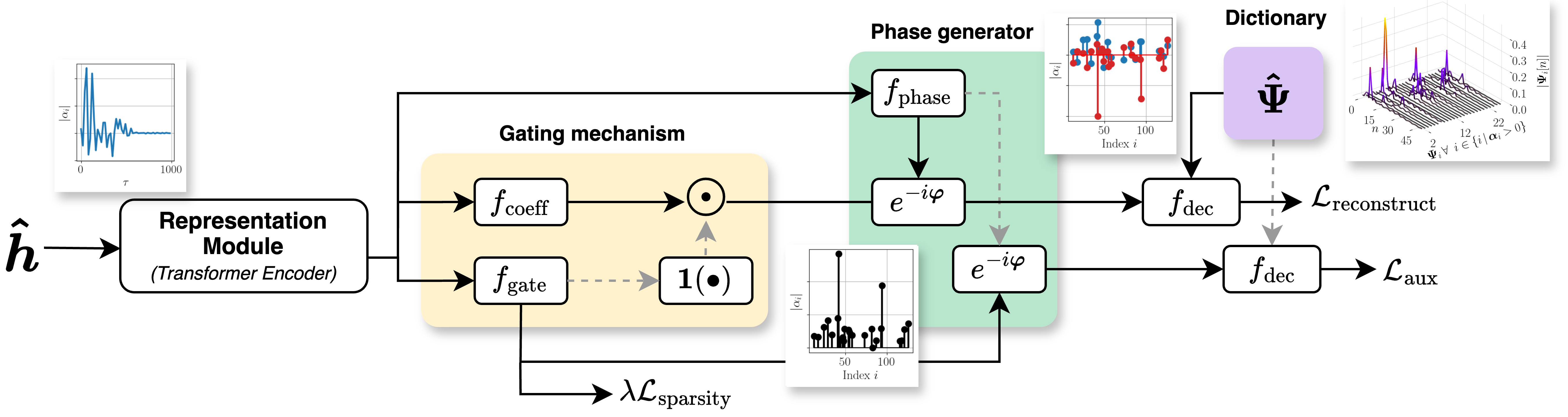}
    \caption{Overview of our unsupervised pretraining method - SpaRTran.}
    \label{fig:overview}
\end{figure*}


In general, we consider a set of unlabeled channel measurements $\mathcal{M}$. 
Our objective is to learn channel representations $\boldsymbol{z}$ that encode the environmental characteristics of the radio signal. 
To this end, we introduce a strong bias into the training process through both model architecture and loss function design.
Our approach employs a data-driven encoder that generates a latent representation $\bs{z}$ and a decoder that reconstructs the input signal $\bs{\hat h} \sim \mathcal{M}$ based on $\bs{z}$, using a hybrid data- and model-driven approach.

A \ac{TF} architecture~\citep{vaswani_attention_2017} forms the backbone of the encoder. 
We employ a lightweight encoder-only \ac{TF} with a depth of one and internal latent dimension of $N_{latent} = 512$, using 8 attention heads for the multihead attention mechanism.
We interpret the complex values at the $m$-th timestep as a three-dimensional vector consisting of the real and imaginary values as well as the magnitude $\bs{\hat h}_m = [\mathcal R(\hat h_m), \mathcal I(\hat h_m), |\hat h_m|]^T$. 
Similar to the state-of-the-art~\cite{alikhani_large_2024, guler_multi-task_2025}, we use non-overlapping patches of $\bs{\hat h}$ to construct our input embeddings $\bs{e}$. 
We combine windows of three time steps of the \ac{CIR} into an input token $\bs{e}_m = [\bs{\hat h}_{3m-2}, \bs{\hat h}_{3m-1}, \bs{\hat h}_{3m}]^T$, ultimately resulting in 9 dimensions per token.  
We project each input token into the latent space of dimension $N_{latent}$ via a learned linear transformation to match the internal dimensionality of the \ac{TF}-encoder. 
An ablation study comparing different \ac{TF} configurations can be found in Appendix~\ref{sec:hparams}.

\subsection{Sparse Reconstruction Head}%
\label{sec:meth:sparse_head}%

The sparse reconstruction head aims to find a sparse signal decomposition based on the representation $\bs{z}$.
It uses a gating mechanism (inspired by~\citet{rajamanoharan_improving_2024}) and a phase generator. 
The former promotes the reconstruction to be sparse while the latter converts the real-valued output of the neural network to the complex-valued coefficients $\bs{\hat a}$. 
$\bs{\hat a}$ represents the reconstructed signal in terms of an overcomplete dictionary $\bs{\Psi}$, see Eq.~\ref{eq:sparse_model}. 
Fig.~\ref{fig:overview} shows the gating mechanism (yellow), the phase generator (green), and the dictionary (purple). 
Note that the learned representation $\bs{z}$ is itself not sparse. 

We now discuss the gating mechanism in more detail. 
Approximating the $l_0$-norm with the $l_1$-norm tends to lead to non-optimal reconstruction. 
This is due to the fact that the sparsity penalty, i.e., the $l_1$-norm, can be reduced at the cost of reconstruction performance~\citep{wright_addressing_2024}. 
Hence, our strategy for the estimation of $\boldsymbol{\hat x}$ follows the work of~\citet{rajamanoharan_improving_2024}. 
The idea is to separately handle the selection of active atoms from the dictionary~($\boldsymbol{f_{\text{gate}}}$) and the estimation of the magnitude of the coefficients~($\boldsymbol f_{\text{coeff}}$). 
The encoder output is defined by  
\begin{equation}
\boldsymbol{\hat x} = \boldsymbol f_{\text{coeff}}(\boldsymbol z) \odot \mathbf{1}(\underbrace{\boldsymbol f_{\text{gate}}( \boldsymbol z)}_{\bs \rho_{\text{gate}}}),
\end{equation}
where $\mathbf{1}$ denotes the Heaviside step function, $\odot$ the Hadamard product and $\bs{\rho_{\text{gate}}}$ is the output of the gating stage before the binarization step. 
Fig.~\ref{fig:overview} shows the gating mechanism (yellow). 
Due to the binarization of the gating values, no gradient flows through this path of the network; see grey arrows in the yellow box of Fig.~\ref{fig:overview}. 
Thus, an auxiliary loss promotes the detection of active atoms in $\boldsymbol f_{\text{gate}}$. 
The auxiliary loss measures reconstruction fidelity, but instead of $\bs{\hat x}$, it uses $\bs{\rho_{\text{gate}}}$ to reconstruct the signal. 
The dictionary should not be updated by the auxiliary reconstruction task. 
Hence, we prohibit the flow of the gradient accordingly (see grey dashed line in Fig.~\ref{fig:overview}). 

We now outline our extensions to the original method. 
\citet{rajamanoharan_improving_2024} restrict the encoders output $\boldsymbol{\hat x}$ to real positive numbers. 
However, this assumption does not hold in our case, as our goal is to estimate complex-valued path coefficients $\boldsymbol{\hat a}$. 
To address this, we interpret the outputs of $\bs{f}_{\text{coeff}}$ and $\bs{f}_{\text{gate}}$ as the magnitudes of the complex coefficients. 
This formulation allows us to suppress negative values via the gating mechanism without violating the underlying physical channel model. 
In addition, we introduce a third function $\bs{f}_{\text{phase}}$ that generates the phases of the path coefficients.
The final coefficients are then constructed as:
$\bs{\hat a} = \bs{\hat x}e^{-i\bs{f}_{\text{phase}}(\bs{z})}$ and 
$\bs{\rho}_{\text{gate}}'~=~\bs{\rho}_{\text{gate}}e^{-i\bs{f}_{\text{phase}}(\bs{z})}$,
where $i$ denotes the imaginary unit. 
The output of $\bs{f}_{\text{phase}}$ is constrained to the interval $\pm \pi$ using a scaled tanh activation function. 
This leads to the following loss function:
\begin{equation}
\begin{split}
\mathcal{L} := \underbrace{\lVert \bs{\tilde h} - \bs f_{\text{dec}}(\bs{\hat a}, \bs{\hat \Psi}) \rVert_2^2}_{\text{reconstruction loss}} + \underbrace{\lambda \lVert \mathbf{1}(\bs \rho_{\text{gate}})\rVert_1}_\text{sparsity penalty} \\ + \underbrace{\lVert \bs{\tilde h} - \bs f_{\text{dec}}(\bs{\rho}_{\text{gate}}', \bs{\hat\Psi}_{\text{frozen}})\rVert_2^2}_\text{auxiliary loss},
\label{eq:loss}
\end{split}
\end{equation}
with $f_{dec}(\bs{\hat a}, \bs{\hat\Psi}) =  \bs{\hat\Psi} \bs{\hat a}$ (see Fig.~\ref{fig:overview}, purple box).
To enforce non-negativity, \citet{rajamanoharan_improving_2024} employ ReLU activations for $\bs{f}_{\text{gate}}$ and $\bs{f}_{\text{coeff}}$. 
We observed that this can lead to a situation where certain dictionary atoms are never activated, i.e., their associated coefficients remain zero, resulting in no gradient updates, a phenomenon akin to the dying ReLU problem. 
To mitigate this, we use leaky ReLU activations (slope $0.01$), ensuring that gradients can still propagate even for inactive units.

\subsection{Dictionary Composition}%
\label{sec:meth:dictlearn}%

There are two ways to obtain the dictionary. First, a theoretically derived fixed dictionary $\bs{\Psi} \in \mathbb{R}^{M\times N}$ corresponding to the model we outlined in Sec.~\ref{sec:problem_description}. Second, a learned dictionary~$\bs{\hat\Psi}$ that consists of jointly optimized parameters. We first detail the theoretical approach.

\textbf{Fixed Dictionary.} If the bandwidth $W$ of the considered signals is known it is reasonable to construct $\bs{\Psi}$ from the theoretical model. In this case, the columns of $\bs{\Psi}$ contain shifted sinc-functions such that:
\begin{equation}
    \Psi_{m,i} = \text{sinc}\left[m - i\frac{\tau_{max}W}{N}\right],
\end{equation}
where $\tau_{max}$ is the last expected arrival of a significant signal component, which can be trivially determined by the number of time-steps $M$ of $\bs{\hat h}$ and the sampling frequency $1/W$. 

\textbf{Learned Dictionary.} If crucial system parameters, such as the signal bandwidth or the receiver's sampling frequency, are unknown, the construction of a dictionary that conforms to the theory is unfeasible.  
This may occur when using diverse crowd-sourced signals to train a large foundation model. 
In this case, SpaRTran can treat the dictionary as learnable parameters forming $\bs{\hat\Psi}$.
By normalizing the atomic entries to unit norm, they only determine the direction of the contribution, while $\bs{\hat a}$ provides the amplitude and phase of the complex-valued signal component. However, jointly optimizing an unrestricted dictionary renders problem~(\ref{eq:convex_opt}) NP-hard~\cite{tropp_greed_2004} and the model-aided regularization effect is decreased (see also Sec.~\ref{subsec:eval-ablation}).

\subsection{Finetuning}
\label{sec:finetuning}

While SpaRTran learns representations for individual transmitter–receiver links, most tasks exploit correlations across multiple channels. 
We consider $N_r$ links recorded by an agent traversing the environment. We aggregate these into the channel state 
$\bs{H} = [\bs{\hat h}_1 \cdots \bs{\hat h}_{N_r}]$, referred to as the CSI. 
Hence, for finetuning, we compute a representation for each available link and concatenate them to form a complete representation of the \ac{CSI}.
Our head is a 1D ResNet-style architecture starting with a $1 \times 1$  convolutional block that maps the output dimension $N_{latent}$ of the learned representation to 16 input channels. This is followed by a sequence of 
$N_{blocks, head}$ residual blocks. The $i$-th residual block outputs $16*2^{i-1}$ channels and uses a kernel size of $5+2(i-1)$, so both the channel width and receptive field grow with depth. A global average pooling layer, followed by a fully connected layer, maps the final feature representation to the estimate. We use $N_{blocks, head} = 4$ in our evaluations. 
An ablation study comparing different head depths can be found in the appendix (Section~\ref{sec:hparams}). 

\subsection{Theoretical Analysis}
\label{sec:theory}

We treat the incoming signal as a time-dependent function $h(t)$ to be expressed in a 
reproducing kernel Hilbert space (RKHS) $\mathcal{H}$ with basis $\Psi=\{\psi_i\}^N_{i=0}$.
For the sake of theoretical analysis, we treat the transformer encoder as an invertible operator
$O:\mathcal{H}\to \mathcal{H}$ mapping from measured functions $h\in\mathcal{H}$ to latent representations $\tilde h = O[h]$.
Our idea is to model the $\tilde h$ rather than $h$ itself to reach a more compact representation.
We first characterize the error of a signal transformed by an invertible operator $O$.
Note that all proofs for the theorems can be found in the appendix.

\begin{theorem}\label{thm:OperatorBound}
    Let $\mathcal{H}$ be a reproducing kernel Hilbert space, equipped with a basis $\{\psi_i\}^N_{i=0}$.
    For any $h\in\mathcal{H}$ let the best n-term approximator be
    
    $$\sigma_n(h) = \min_{|I|\leq n} \|h - \sum_{i\in I}a_i\psi_i\|_{\mathcal{H}}.$$
    Also define the 1-atomic norm as
    $$\|h\|_{A_1(H)} = \inf\{\sum_i\|a_i\| : h = \sum_i a_i\psi_i\}$$
    Assume that there exists an exact recovery condition (ERC)~\citep{tropp_greed_2004} such that the $O(n^{1/2})$ rate is optimal \citep{klusowski_sharp_2025}\footnote{This is equivalent to asserting a generalised Jackson-type inequality over the codebook established by \citet{Temlyakov_2011}.}). Then there exists a constant $C>0$ such that for every $h\in\mathcal{H}$ we have
    $$\sigma_n(h)\leq C\frac{\|h\|_{A_1(\mathcal{H})}}{\sqrt{n}}.$$
    Let $O: \mathcal{H} \to \mathcal{H}$ be any invertible bounded linear operator with the standard operator norm $\|O\| = \sup_{\|h\|_{\mathcal{H}}=1}\|O[h]\|_{\mathcal{H}}$.
    Define the O-atomic norm of $f\in \mathcal{H}$ as 
    $$\|h\|_{A_1^O} = \|O[h]\|_{A_1(\mathcal{H})}$$
    Then there exists an n-term representation $g_n$ in $\mathcal{H}$ such that
    $$\|O[h] - g_n\|_{\mathcal{H}} \leq C\frac{\|h\|_{A_1^O}}{\sqrt{n}}$$
    or, in the original space: $g_n = O[\tilde h_n]$
    $$\|h - \tilde h_n\|_{\mathcal{H}} \leq \|O^{-1}\| C\frac{\|h\|_{A_1^O}}{\sqrt{n}}$$
    
    
\end{theorem}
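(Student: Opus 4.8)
The statement factors neatly into a classical approximation-theoretic estimate -- the first displayed inequality -- and a short functional-analytic transfer that carries it through the operator $O$. My plan is to treat the bound $\sigma_n(h)\le C\|h\|_{A_1(\mathcal{H})}/\sqrt{n}$ as the known greedy (Maurey--Jones--Barron) rate: under the stated ERC and the optimality of the $n^{-1/2}$ rate this is exactly the sharp Jackson-type estimate of the cited works, asserting that a greedy selection of $n$ atoms from $\{\psi_i\}$ approximates any element of the $A_1$-ball at rate $n^{-1/2}$ with a dictionary-uniform constant $C$. Since the dictionary $\{\psi_i\}_{i=0}^N$ is finite, I would note that the infimum defining $\sigma_n$ is attained over the finitely many index sets $I$ with $|I|\le n$ (for each such $I$ the optimal coefficients are the orthogonal projection onto $\mathrm{span}\{\psi_i:i\in I\}$), so the bound is witnessed by an explicit $n$-term approximant.

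The crucial observation is that this first inequality holds for \emph{every} element of $\mathcal{H}$, so I would simply instantiate it at $O[h]$ rather than at $h$. Because $O$ maps $\mathcal{H}$ into itself, $O[h]\in\mathcal{H}$ and therefore
\[
\sigma_n(O[h]) \le C\,\frac{\|O[h]\|_{A_1(\mathcal{H})}}{\sqrt{n}} = C\,\frac{\|h\|_{A_1^O}}{\sqrt{n}},
\]
where the equality is just the definition of the $O$-atomic norm. Taking $g_n$ to be a minimizing $n$-term approximant of $O[h]$ (attained as above) gives the first target bound $\|O[h]-g_n\|_{\mathcal{H}}\le C\|h\|_{A_1^O}/\sqrt{n}$.

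To descend to the original space I would set $\tilde h_n:=O^{-1}[g_n]$, so that $g_n=O[\tilde h_n]$ as claimed. Linearity of $O^{-1}$ yields $h-\tilde h_n = O^{-1}[O[h]-g_n]$, and applying the operator-norm estimate $\|O^{-1}[v]\|_{\mathcal{H}}\le\|O^{-1}\|\,\|v\|_{\mathcal{H}}$ propagates the previous bound:
\[
\|h-\tilde h_n\|_{\mathcal{H}} \le \|O^{-1}\|\,\|O[h]-g_n\|_{\mathcal{H}} \le \|O^{-1}\|\,C\,\frac{\|h\|_{A_1^O}}{\sqrt{n}}.
\]
Here the finiteness of $\|O^{-1}\|$ is free: $O$ is a bounded linear bijection of the complete space $\mathcal{H}$, so the open mapping theorem guarantees a bounded inverse.

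I expect essentially all of the difficulty to sit in the first inequality, which I am importing wholesale; the transfer in the last two paragraphs is bookkeeping. The main thing to get right is that the greedy rate is a \emph{uniform} statement over $\mathcal{H}$ with a constant independent of the particular function, since that is exactly what licenses its application to $O[h]$; and, at the level of premises, that the ERC and the sharp $n^{-1/2}$ rate actually hold for the working dictionary $\{\psi_i\}$. The transfer stays clean only because the approximating system in $\sigma_n$ is the fixed basis $\{\psi_i\}$ while $O$ acts on the target function alone -- had one instead approximated in the image system $\{O[\psi_i]\}$, the factorization would couple $O$ to the dictionary and force control of $\|O\|$ as well, so I would take care to keep these two roles separate.
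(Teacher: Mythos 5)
Your proposal is correct and follows essentially the same route as the paper: instantiate the assumed ERC/Jackson bound at $u=O[h]$, take an $n$-term approximant $g_n$ of $u$, set $\tilde h_n = O^{-1}[g_n]$, and pull the bound back through the operator-norm estimate $\|O^{-1}\|$. The only cosmetic difference is that the paper handles possible non-attainment of the infima with an $\varepsilon$-slack that it sends to zero at the end, whereas you argue attainment directly from finiteness of the dictionary (orthogonal projection onto each span); both are fine.
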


The result in \Cref{thm:OperatorBound} implies that if one tries to bound $K$ functions simultaneously, 
$$\max_{0\leq i\leq K}\|h_i - h_{i,n}\| \leq \|O^{-1}\| C \frac{\max_{0\leq i\leq K} \|h_i\|_{A^O_1}}{\sqrt{n}}$$
preconditioning with $O$ improves upon the direct estimation if there exists a common substructure that allows us to reduce the 
coefficients in $\|h_i\|_{A^O_1}$ without overly affecting $\|O^{-1}\|$.

This is not unexpected: This is exactly what happens in signal smoothing where $O$ flattens singularities, 
or in differential preconditioning for PDEs. We provide an example of an analytically derived operator in \Cref{app:examplePrecond}.
Instead of using prior knowledge, we decide to learn $O$ with the option to fit it jointly with the dictionary $\Psi$ on a dataset of $\{h_0,\dots,h_K\}$
such that the operator preconditions the entire set.

Specifically, we can construct such an operator as follows:
\begin{theorem}\label{thm:DiagonalImprovement}
    Let $\{h_i\}^K_{i=0}$ be a dataset of signals, and $\mathcal{H}$ be a Hilbert space with a (possibly infinite) basis $\{\psi\}_{j=0}^N$.
    Let $R=\max_{0\leq i \leq K}\sum^N_{j=1}|a_{i,j}|$ and fix a nonempty index set $S\subset \{1,\dots,N\}$. Define
    $$R_S :=\max_{0\leq i\leq K}\sum_{j\in S}|a_{i,j}| \qquad B:=\max_{0\leq i\leq K}\sum_{j\notin S}|a_{i,j}|$$
    If $B<R_S$ then there exists an operator $O$ such that 
    $$\max_{0\leq i\leq K}\|h_i - h_{i,n}| \leq C\frac{\max_{0\leq i \leq K}\|h_i\|_{A^O_1}}{\sqrt{n}}$$
    which is strictly better than the original rate $C\frac{R}{\sqrt{n}}$
\end{theorem}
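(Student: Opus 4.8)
The plan is to reduce the statement to \Cref{thm:OperatorBound} by exhibiting a single invertible operator $O$ that strictly shrinks the worst-case atomic norm of the whole dataset, i.e.\ one for which $\max_i \|h_i\|_{A_1^O} < R$. Indeed, once such an $O$ is found, applying the preconditioned estimate of \Cref{thm:OperatorBound} to each $h_i$ separately and then taking the maximum over $i$ yields $\max_i\|h_i-h_{i,n}\| \le C\max_i\|h_i\|_{A_1^O}/\sqrt n < C R/\sqrt n$, which is exactly the claimed improvement. So the entire task is to construct $O$ and to quantify the gain in terms of $R_S$ and $B$.

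For the construction I would take $O$ diagonal in the basis $\{\psi_j\}$, acting as the identity on the complement $S^c$ and contracting the shared block $S$ by a factor $t\in(0,1)$: set $O[\psi_j]=t\,\psi_j$ for $j\in S$ and $O[\psi_j]=\psi_j$ otherwise. This is invertible for every $t>0$. Writing $s_i=\sum_{j\in S}|a_{i,j}|$ and $c_i=\sum_{j\notin S}|a_{i,j}|$, the coefficients of $O[h_i]$ are $d_j a_{i,j}$, so $\|h_i\|_{A_1^O}=\|O[h_i]\|_{A_1}\le t\,s_i+c_i$, and therefore $\max_i\|h_i\|_{A_1^O}\le t\,R_S+B$.

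It then remains to choose $t$. Since $c_i\ge 0$ we always have $R_S\le R$, so the hypothesis $B<R_S$ forces $B<R$; consequently, for any $t<(R-B)/R_S$ (which gives a nonempty admissible range in $(0,1)$) we obtain $tR_S+B<R$. Fixing such a $t$ makes $\max_i\|h_i\|_{A_1^O}\le tR_S+B<R$ strict, and feeding this into \Cref{thm:OperatorBound} closes the argument. Contracting precisely the directions in $S$ is what turns the governing constant from $R$ (which still ``sees'' the shared block) into the strictly smaller $B$ (which only sees the complement), and $B<R_S$ is exactly the slack that makes the inequality strict rather than merely non-increasing.

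The delicate point — and the step I would treat most carefully — is the role of $\|O^{-1}\|=1/t$. In the preconditioned space the bound of \Cref{thm:OperatorBound} carries no $\|O^{-1}\|$ factor, so the reduction above is genuine for the transformed objective; but pulling it back to the untransformed signals multiplies the right-hand side by $\|O^{-1}\|=1/t$, which grows as $t\downarrow 0$ and can cancel the gain. The honest version therefore does not send $t$ to $0$ but only as small as strictness requires, and one must check (or declare as the operative regime) that the reduction from $R$ to $tR_S+B$ is not overwhelmed by the accompanying $1/t$ — precisely the ``without overly affecting $\|O^{-1}\|$'' caveat noted after \Cref{thm:OperatorBound}. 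This balancing of atomic-norm shrinkage against control of $\|O^{-1}\|$ is, I expect, the crux of the result.
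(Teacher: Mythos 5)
Your construction is essentially the paper's own: a diagonal operator that reweights the block $S$ against its complement, with your admissibility condition $t<(R-B)/R_S$ being literally the paper's choice $c>R_S/(R-B)$ under $c=1/t$. But the issue you flag in your last paragraph is not a deferrable technicality that a careful choice of $t$ can handle --- it is fatal for the entire diagonal family. For any diagonal $O$ with weights $w_j>0$ one has $\|O^{-1}\|\geq \sup_k w_k^{-1}$ (test $O^{-1}$ on $\psi_k$), hence
$$\|O^{-1}\|\,\max_{i}\|h_i\|_{A_1^O}\;\geq\;\Bigl(\inf_k w_k\Bigr)^{-1}\max_i\sum_j w_j|a_{i,j}|\;=\;\max_i\sum_j \frac{w_j}{\inf_k w_k}|a_{i,j}|\;\geq\;\max_i\sum_j|a_{i,j}|\;=\;R,$$
since every ratio $w_j/\inf_k w_k\geq 1$. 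With your specific choice the cancellation is exact: $(1/t)\,(tR_S+B)=R_S+B/t\geq R_S+B\geq R$. So the feasible set for your proposed ``balancing'' is empty: no $t\in(0,1)$ (indeed no diagonal reweighting whatsoever) yields a strict improvement of the original-space bound via \Cref{thm:OperatorBound}.

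Your consistent bookkeeping in fact exposes that the paper's proof breaks at exactly the step you identified, just in mirror image: the paper puts weight $c>1$ \emph{on} $S$, which does give $\|O^{-1}\|=\max(1/c,1)=1$, but then evaluates $\max_i\|h_i\|_{A_1^O}$ as $B+R_S/c$ --- which is the atomic norm produced by the \emph{inverse} of that operator (your $O$ with $t=1/c$). For the operator actually defined there, the correct value is $\max_i\bigl(c\sum_{j\in S}|a_{i,j}|+\sum_{j\notin S}|a_{i,j}|\bigr)\geq R$. No single diagonal operator attains both $\|O^{-1}\|=1$ and the reduced norm; the paper conflates $O$ with $O^{-1}$. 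What your argument does establish correctly is the preconditioned-space statement, $\max_i\|O[h_i]-g_{i,n}\|_{\mathcal H}\leq C\,(tR_S+B)/\sqrt{n}<CR/\sqrt{n}$, where no pullback factor appears --- i.e., the transformed dataset genuinely admits a better $n$-term rate, which seems to be the theorem's intent given the surrounding discussion. Read literally in the original space, however, neither your proof nor the paper's proves the claim; a genuine improvement there requires non-diagonal structure (e.g., orthonormal transforms that keep $\|O^{-1}\|=1$ while concentrating coefficient mass, the mechanism implicitly invoked in \Cref{cor:LowRankImprovement}).
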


\begin{remark}
    The assumptions in \Cref{thm:DiagonalImprovement} are quite technical, but can be boiled down to ``One can sacrifice a more irrelevant subset $\bar S$ in favor of modeling a relevant subset $S$ with higher weight.''
    In practice, this is a small assumption for an uninformed selection of the basis.
\end{remark}
Of course, the ``small index set'' assumption for \Cref{thm:DiagonalImprovement} is generally restrictive, but we can easily generalize the proof towards a low-rank assumption:
\begin{corollary}\label{cor:LowRankImprovement}
    Assume there exists a rank $s$ subspace $S$, then the same bounds from \cref{thm:DiagonalImprovement} hold.
\end{corollary}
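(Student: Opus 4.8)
The plan is to reduce the low-rank claim to the already-established coordinate-aligned statement of \Cref{thm:DiagonalImprovement} by a unitary change of basis. The only role played by the index set $S$ in \Cref{thm:DiagonalImprovement} is to split each signal's coefficient mass into the part carried by $S$, which contributes $R_S$, and the part carried by $\bar S$, which contributes $B$. For a general rank-$s$ subspace this coordinate split is replaced by the orthogonal decomposition $\mathcal{H}=S\oplus S^{\perp}$ with associated projectors $P_S$ and $P_{S^{\perp}}$. Concretely, I would fix an orthonormal basis $\{\phi_1,\dots,\phi_s\}$ of $S$, extend it to an orthonormal basis of $\mathcal{H}$, and let $U$ be the unitary that sends this adapted basis onto the standard one, so that $S$ becomes exactly the span of the first $s$ coordinate directions.

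Next I would transport the whole problem through $U$. Define the rotated dictionary $\psi_j':=U\psi_j$ and rotated signals $h_i':=U h_i$. Since $U$ is an isometry of $\mathcal{H}$, the best $n$-term errors are preserved, $\sigma_n(h_i')=\sigma_n(h_i)$, and the atomic norms relative to the rotated dictionary coincide with the originals, $\|h_i'\|_{A_1(\{\psi_j'\})}=\|h_i\|_{A_1(\{\psi_j\})}$. In the rotated coordinates the low-rank concentration hypothesis is \emph{exactly} the index-set hypothesis of \Cref{thm:DiagonalImprovement} with $S'=\{1,\dots,s\}$: the quantities $R_{S}$ and $B$ are reinterpreted as the atomic mass carried by $P_S h_i$ and by $P_{S^{\perp}} h_i$, and the assumption $B<R_S$ transfers unchanged.

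I would then apply \Cref{thm:DiagonalImprovement} in the rotated coordinate system to obtain an operator $O'$ together with the improved rate, and pull it back by conjugation, setting $O:=U^{-1}O'U$. Because $U$ is unitary, $\|O^{-1}\|=\|U^{-1}O'^{-1}U\|=\|O'^{-1}\|$, and the $O$-atomic norms are preserved, $\|h_i\|_{A_1^{O}}=\|O' h_i'\|_{A_1(\{\psi_j'\})}$. Hence the effective constant $\|O^{-1}\|\max_i\|h_i\|_{A_1^{O}}$ equals the one produced for $\{h_i'\}$, so the bound holds for the original signals and dictionary with the same $C$ and is again strictly better than $C R/\sqrt{n}$. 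Equivalently, and more in the spirit of ``generalizing the proof,'' one simply reruns the construction of $O$ from \Cref{thm:DiagonalImprovement} with the coordinate projectors replaced by $P_S$ and $P_{S^{\perp}}$; the rotation argument merely certifies that no estimate depended on $S$ being axis-aligned.

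The step I expect to be the main obstacle is checking that the hypotheses feeding the $O(n^{-1/2})$ rate in \Cref{thm:OperatorBound}, namely the exact recovery condition and the associated Jackson-type inequality over the codebook, remain valid after rotating the dictionary by $U$, so that the constant $C$ does not degrade. A second delicate point is the possibly overcomplete or infinite basis: there one must choose $U$ compatibly with the redundant dictionary and verify that $P_{S^{\perp}}$ interacts with the non-orthogonal atoms so that the tail mass $B$ is still controlled. This is where essentially all of the technical care of \cref{cor:LowRankImprovement} resides.
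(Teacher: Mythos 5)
Your proposal is correct and follows essentially the same route as the paper's proof: conjugate by an orthonormal transform that aligns the rank-$s$ subspace with coordinate axes (the paper obtains it via SVD/polar decomposition), apply \cref{thm:DiagonalImprovement} in the rotated frame, and use unitarity ($\|V^{-1}\|=1$) so the back-projection costs nothing. Your write-up is in fact more careful than the paper's one-line argument, since you explicitly flag the preservation of the ERC/Jackson constant under rotation and the overcomplete-dictionary subtlety, neither of which the paper addresses.
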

\begin{remark}
    The assumptions in \cref{cor:LowRankImprovement} may seem strong at first, but in practice boil down to a weaker version of the assumptions necessary for compressed sensing: The real signal lives in a lower dimensional (function) space than the (noisy) measured signal
\end{remark}

\begin{table*}[bt!] \renewcommand{\arraystretch}{1.0} 
\caption{\Ac{FP} performance for SparTRan and the baselines models for different amounts of labeled training data evaluated on the FH-IIS dataset (MAE / CE90 in meter).} 
\label{tab:hw} 
\vskip 0.15in
\centering 
\small
\begin{tabular}{@{}rccccccc@{}} \toprule
Method& 1\% & 2\% & 5\% & 10\% & 25\% & 50\% & 100\%\\\midrule
\textbf{Masking} & 0.73 / 1.27  &  0.61 / 1.08  &  0.50 / 0.89  & 0.48 / 0.84  & 0.42 / 0.75 & 0.44 / 0.76 & 0.40 / 0.70 \\  
\textbf{LWM} & 1.17 / 2.10 & 0.91 / 1.63 & 0.72 / 1.27 & 0.63 / 1.11 & 0.56 / 0.99 & 0.55 / 0.97 & 0.58 / 0.99\\ 
\textbf{SWiT} & 2.33 / 4.24 & 2.09 / 3.81 & 1.90 / 3.46 & 1.79 / 3.29 & 1.76 / 3.22 & 1.65 / 3.01 & 1.51 / 2.75\\  
\textbf{Contra-WiMAE} & \textbf{0.52} / \textbf{0.93} & \textbf{0.48} / \textbf{0.86} & \textbf{0.45} / \textbf{0.80} & 0.43 / 0.76 & 0.39 / 0.70 & 0.42 / 0.74 & 0.39 / 0.69 \\  
\textbf{SpaRTran} & 0.77 / 1.40 & 0.58 / 1.04 & 0.50 / 0.91 & \textbf{0.41}/ \textbf{0.74} & \textbf{0.34} / \textbf{0.62} & \textbf{0.30} / \textbf{0.56} & \textbf{0.30} / \textbf{0.55}\\  
\textbf{WiT (Sup.)} & 1.27 / 2.32 & 1.05 / 1.88 & 0.88 / 1.59 & 0.89 / 1.61 & 0.66 / 1.21 & 0.56 / 1.03 & 0.49 / 0.90\\  
\bottomrule
\end{tabular} 
\end{table*}%

One important consequence of \Cref{cor:LowRankImprovement} is that we can find a smaller basis expansion, i.e., the weighted sum of basis functions in $\Psi$,  by implicitly learning an infinite-dimensional dense rotation and  diagonal scaling operator $O$.
While simple finite operations at first appear to be sufficient for this, one has to note that the basis set we work with is generally infinite, meaning an orthogonal projection is
not naively parameterizable.
We decide to approximate $O$ with a \ac{TF} encoder network by jointly minimizing the reconstruction error after applying $O$ together with the codebook $\Psi$. 
In short, our backbone transforms channel measurements to align with compressed sensing assumptions, biasing outputs toward simpler and more general solutions.

\section{Evaluation}%
\label{sec:experimental_setup}%
\begin{table*}[!htbp] \renewcommand{\arraystretch}{0.8} 
\caption{\ac{FP} performance across different system setups finetuned on $5\,000$ samples of the KUL dataset (MAE / CE90 in meter).} 
\label{tab:kul} 
\vskip 0.15in
\centering 
{\small
\begin{tabular}{@{}rrcccc@{}} \toprule
Method& Pretrain-Set& DIS-LoS & ULA-LoS & URA-LoS & URA-nLoS \\
\midrule
\multirow{4}{*}{\textbf{Masking:}}
&DIS-LoS & 0.093 / 0.158 & 0.065 / 0.118 & 0.071 / 0.129 & 1.176 / 1.626\\  
&ULA-LoS & 0.081 / 0.139 & 0.067 / 0.116 & 0.071 / 0.127 & 1.094 / 1.615\\  
&URA-LoS & 0.087 / 0.156 & 0.068 / 0.118 & 0.073 / 0.131 & 1.176 / 1.671\\  
&URA-nLoS & 0.072 / 0.128 & 0.067 / 0.120 & 0.073 / 0.139 & 1.153 / 1.627\\  
\midrule
\multirow{4}{*}{\textbf{SWiT:}} 
& DIS-LoS & 0.071 / 0.139 & 0.069 / 0.138 & 0.057 / 0.119 & 0.154 / 0.324\\  
&ULA-LoS & 0.076 / 0.146 & 0.068 / 0.136 & 0.058 / 0.119 & 0.140 / 0.303\\  
&URA-LoS & 0.070 / 0.138 & 0.068 / 0.136 & 0.057 / 0.119 & 0.156 / 0.327\\  
&URA-nLoS & 0.077 / 0.148 & 0.068 / 0.137 & 0.059 / 0.119 & 0.152 / 0.326\\  
\midrule
\multirow{4}{*}{\textbf{LWM:}} 
& DIS-LoS & 0.083 / 0.144 & 0.084 / 0.151 & 0.075 / 0.137 & 0.244 / 0.454 \\  
&ULA-LoS & 0.085 / 0.151 & 0.082 / 0.146 & 0.079 / 0.147 & 0.256 / 0.500\\  
&URA-LoS & 0.075 / 0.132 & 0.082 / 0.147 & 0.065 / 0.122 & 0.204 / 0.372\\  
&URA-nLoS & 0.094 / 0.164 & 0.092 / 0.163 & 0.083 / 0.151 & 0.224 / 0.418\\  
\midrule
\multirow{4}{*}{\textbf{Contra-WiMAE:}} 
& DIS-LoS & \textbf{0.040} / \textbf{0.065} & 0.081 / 0.152 & 0.159 / 0.395 & 0.741 / 1.390 \\  
&ULA-LoS & 0.098 / 0.164 & \textbf{0.043} / \textbf{0.069} & 0.070 / 0.131 & 0.403 / 1.039 \\  
&URA-LoS & 0.077 / 0.140 & 0.069 / 0.122 & \textbf{0.040} / \textbf{0.077} & 0.128 / 0.245\\  
&URA-nLoS & 0.213 / 0.357 & 0.362 / 0.613 & 0.220 / 0.424 & 0.114 / 0.207\\  
\midrule
\multirow{4}{*}{\textbf{SpaRTran:}}
& DIS-LoS & 0.127 / 0.186 & \textbf{0.055} / \textbf{0.098} & \textbf{0.043} / \textbf{0.087} & \textbf{0.071} / \textbf{0.131}\\  
&ULA-LoS & \textbf{0.065} / \textbf{0.114} & 0.058 / 0.100 & \textbf{0.054} / \textbf{0.110}  & \textbf{0.075} / \textbf{0.139}\\  
&URA-LoS & \textbf{0.072} / \textbf{0.122} & \textbf{0.048} / \textbf{0.085} & 0.051 / 0.100 & \textbf{0.080} / \textbf{0.145}\\  
&URA-nLoS & \textbf{0.064} / \textbf{0.109} & \textbf{0.048} / \textbf{0.082} & \textbf{0.038} / \textbf{0.073} & \textbf{0.077} / \textbf{0.142}\\  
\midrule
\textbf{WiT (Sup.):} &  & 0.074 / 0.135 & 0.071 / 0.135 & 0.059 / 0.108 & 0.196 / 0.387 \\  
\bottomrule
\end{tabular} 
}
\end{table*}
\begin{table*}[!htbp]
    \renewcommand{\arraystretch}{0.8}
    \caption{Top-1 accuracy (\%) for codebook selection in the beamforming task. Finetuning was performed across task complexities defined by codebook sizes (16, 32, 64, 128) and varying proportions of labeled training data (1\%, 2\%, 5\%, 10\%, 25\%, 50\%, 100\%). Results for Contra-WiMAE are taken from~\cite{guler_multi-task_2025}, all others are obtained by reproducing the results.}
    \label{tab:bf-table}
    \vskip 0.15in
    \centering
    {\small
    \begin{tabular}{@{}rcccccccc@{}}
        \toprule
        Method & Codebook Size & 1\% & 2\% & 5\% & 10\% & 25\% & 50\% & 100\% \\
        \midrule
        \multirow{4}{*}{\textbf{Masking}} 
        & 16   &     50.7      &    59.6       &     68.5      &     72.4      &    76.8       &      78.4     &    80.6    \\  
        & 32   &     31.6      &    40.6       &     53.5      &     67.6      &    74.9       &      78.1     &    81.2    \\  
        & 64   &     15.9      &    23.4       &     33.2      &     43.6      &    59.1       &      64.8     &    65.0    \\  
        & 128  &     9.4       &    10.1       &     13.0      &     20.8      &    33.0       &      40.4     &    42.1    \\  
        \midrule
        \multirow{4}{*}{\textbf{SWiT}} 
        & 16   &     \textbf{56.7}      &      61.3     &     73.5      &     78.6      &     81.3      &     83.7      &     84.3      \\  
        & 32   &     40.0      &      50.6     &     71.1      &     76.4      &     77.2      &     80.0      &     81.5      \\  
        & 64   &     30.1      &      36.7     &     51.6      &     60.1      &     69.5      &     70.2      &     68.0      \\  
        & 128  &     18.6      &      22.6     &     25.3      &     36.5      &     44.6      &     47.6      &     41.2      \\  
        \midrule
        \multirow{4}{*}{\textbf{LWM}} 
        & 16   &    40.3       &      51.0     &     68.7      &     72.5      &    77.4       &     79.1      &     81.2   \\  
        & 32   &    21.0       &      28.4     &     51.9      &     67.3      &    76.3       &     82.0      &     82.4   \\  
        & 64   &    6.8        &      12.5     &     38.7      &     61.3      &    72.2       &     67.8      &     69.4   \\  
        & 128  &    4.0        &      4.3      &     10.8      &     17.5      &    51.8       &     44.0      &     57.7    \\  
        \midrule
        \multirow{4}{*}{\textbf{Contra-WiMAE}} 
        & 16   &     54.2      &    65.3       &     73.8      &     76.2      &     81.3      &     83.0      &     85.0   \\  
        & 32   &     39.0      &    53.0       &     64.8      &     77.0      &     84.0      &     85.4      &     85.6   \\  
        & 64   &     24.9      &    32.5       &     44.7      &     55.0      &     67.7      &     75.8      &     74.7   \\  
        & 128  &     15.0      &    18.3       &     25.0      &     28.8      &     42.5      &     51.4      &     31.3    \\  
        \midrule
        \multirow{4}{*}{\textbf{SpaRTran}} 
        & 16   &      56.0              &      \textbf{71.6}     &     \textbf{82.8}      &     \textbf{84.6}      &     \textbf{89.4}      &     \textbf{90.0}      & \textbf{92.8}   \\  
        & 32   &      38.7              &      \textbf{73.6}     &     \textbf{87.6}      &     \textbf{88.9}      &     \textbf{90.4}      &     \textbf{92.9}      &     \textbf{93.7}   \\  
        & 64   &      \textbf{32.3}     &      \textbf{54.2}    &     \textbf{74.4}      &     \textbf{84.2}      &     \textbf{86.6}      &     \textbf{89.5}      &     \textbf{89.0}   \\  
        & 128  &      \textbf{23.8}     &      \textbf{25.7}    &     \textbf{39.9}      &     \textbf{62.9}      &     \textbf{70.5}      &     \textbf{74.5}      &     \textbf{71.8}    \\  
        \midrule
        \multirow{4}{*}{\textbf{WiT (Sup.)}} 
        & 16   &     56.4      &    67.3       &     74.3      &     80.1      &    82.3       &     84.3      &     85.8        \\  
        & 32   &     \textbf{44.3}      &    69.1       &     70.2      &     79.4      &    82.1       &     82.9      &     84.4        \\  
        & 64   &     31.3      &    49.0       &     52.3      &     60.4      &    73.3       &     75.5      &     77,6        \\  
        & 128  &     15.9      &    24.2       &     21.7      &     34.4      &    48.4       &     48.8      &     49.8        \\  
        \bottomrule
    \end{tabular}
    }
\end{table*}

\begin{figure*}[!htbp]
\centering
    \begin{subfigure}[t]{0.3\textwidth}
        \centering
        \includegraphics[height=1.5in]{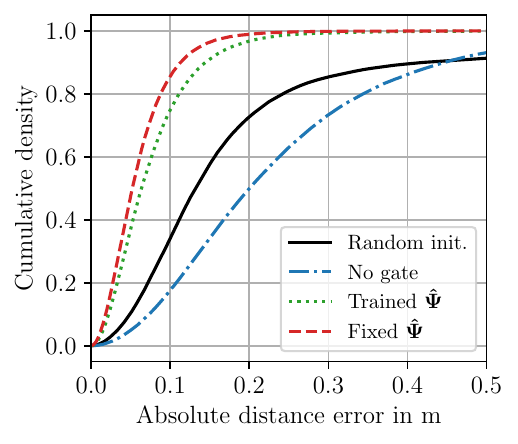}
        \caption{}
        \label{fig:cdf_ablation}
    \end{subfigure}
    \hspace{0.15cm}
    \begin{subfigure}[t]{0.3\textwidth}
        \centering
        \includegraphics[height=1.5in]{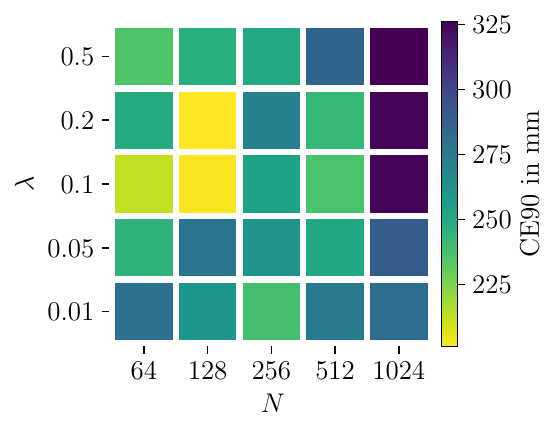}
        \caption{}
        \label{fig:heatmap_accuracy}
    \end{subfigure}
    \hspace{0.8cm}
    \begin{subfigure}[t]{0.3\textwidth}
        \centering
        \includegraphics[height=1.5in]{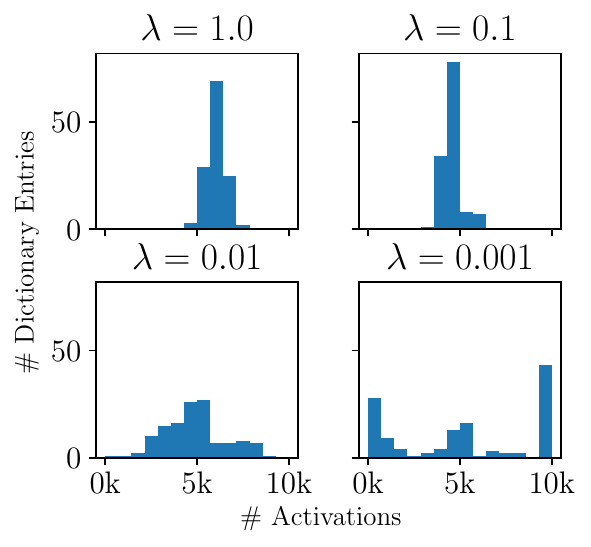}
        \caption{}
        \label{fig:hist_activations}
    \end{subfigure}%
    \caption{(a) shows the cumulative density of the wireless localization error under ablations, (b) shows the localization accuracy depending on the dictionary size $N$ and the sparsity coefficient $\lambda$ and (c) shows the distribution of number of activations on 10000 data points.} 
\end{figure*}

We evaluate localization via CSI fingerprinting (Secs.~\ref{subsec:eval-loc1},~\ref{subsec:eval-loc2}) and codebook selection for beamforming (Sec.~\ref{subsec:eval-beamforming}) --- both leveraging full CSI complexity. 
In Sec.~\ref{subsec:eval-ablation}, we test our approach under several ablations and study the effect of varying values for sparsity penalty $\lambda$ and dictionary size $N$.
For our experiments, we compare four SSL baselines (\textbf{SWiT} \citep{salihu_self-supervised_2024}, \textbf{Masking} \citep{ott_radio_2024}, \textbf{LWM} \citep{alikhani_large_2024} and \textbf{Contra-WiMAE}~\citep{guler_multi-task_2025}) and a supervised method (\textbf{WiT} \citep{salihu_attention_2022}), each using a \ac{TF} backbone. We evaluate all methods on three publicly available datasets: (i) \textbf{KUL}, a small controlled environment~\citep{bast_csi-based_2020}, (ii) \textbf{FH-IIS}, a larger and more complex  environment~\citep{stahlke_uncertainty-based_2024}, and (iii) \textbf{DeepMIMO}~\citep{alkhateeb_deepmimo_2019} for large-scale, diverse urban scenarios used in the codebook selection task. 
More detailed information on the baselines, training procedure, and the datasets can be found in Appendix~\ref{app:exp_details}.

\textbf{Downstream Tasks.} We use two downstream tasks to evaluate the performance of SpaRTran:\\
(1) \textit{\ac{CSI} fingerprinting} maps positions to \ac{CSI} measurements by exploiting their high spatial correlation. Due to multipath wave propagation, the CSI is typically highly characteristic per position, assuming a wide-sense static environment (i.e., negligible changes in the radio environment between training and inference)~\citep{niitsoo_deep_2019, stahlke_transfer_2022}. We use the \ac{MAE} and \ac{CE90} to determine the positioning accuracy.\\
(2) \textit{Beamforming} adapts phases/amplitudes across large phased array antennas to perform directed signal transmissions, mitigating high path losses and interference.  
Codebook selection for beamforming aims to select the optimal beam from a predefined codebook directly from channel measurements. This reduces the channel estimation overhead~\citep{giordani_tutorial_2019}. To assess the performance of this classification task, we use the top-1 accuracy.

\subsection{Localization in small data regime}%
\label{subsec:eval-loc1}

To evaluate data efficiency, we train the methods on the FH-IIS dataset while using varying amounts of labeled data for finetuning, expressed as a percentage of the finetuning split~(see Table~\ref{tab:hw}). 
SpaRTran achieves the highest accuracy when at least 10\% of the labeled samples are used, improving MAE up to 28\% and CE90 up to 24\% over the best baseline Contra-WiMAE, thereby demonstrating the effectiveness of our approach. 
For finetuning with $\leq 5\%$, however, SpaRTran performs worse than Contra-WiMAE. 
This is not surprising, as SpaRTran is pretrained on single links between antennas rather than the full \ac{CSI}, preventing it to learn inter-channel correlations in an unsupervised manner. 
Furthermore, the compressive pretraining objective trades fine-granular, environment-specific signal patterns for a more general representation~(see Sec.~\ref{sec:theory}), reducing accuracy compared to less general methods that benefit from the fixed environment and system setup.

\subsection{Localization under domain shift}%
\label{subsec:eval-loc2}

Table~\ref{tab:kul} presents the results of wireless localization trained on pairs of scenarios, one used for pretraining and the other for finetuning. SpaRTran achieves the best accuracy in most cases. 
In particular, for off-diagonal results (i.e., domain shifts), SpaRTran achieves an average improvement of 15\% in \ac{MAE} and 33\% in CE90, reaching \ac{MAE} $\leq$~\qty{0.071}{m} and \ac{CE90} $\leq$~\qty{0.131}{m} even in the challenging URA-nLoS case. 
While all baseline methods exhibit a marked decline in performance under challenging \ac{nLoS} conditions---i.e. when the most dominant, direct signal path is blocked---SpaRTran maintains a high accuracy. 
This highlights SpaRTran's superior ability to extract meaningful signal features beyond the dominant \ac{LoS} path. 
Notably, in \ac{LoS} cases, using the same pretraining and finetuning setup yields the best results for Contra-WiMAE. However, the significant degradation under domain shifts suggests that the method overfits the available data, whereas SpaRTran generalizes well due to its model-centric approach.

\subsection{Beamforming}
\label{subsec:eval-beamforming}

Table~\ref{tab:bf-table} shows the results for the beamforming downstream task, i.e., selecting the best beam ID in a predefined codebook to steer antenna gain in a specific direction to serve a mobile device.  
Again, SpaRTran consistently outperforms the compared methods. 
Notably, in one of the most challenging settings---a codebook size of 128 with fine‑tuning on only 10\% of the labeled training data---SpaRTran increases top‑1 accuracy by 26 percentage points up to 62.9\% relative to SWiT. 
Again, with a very low amount of labeled data (1\%), SpaRTran's advantage diminishes due to the pretraining on single transmission links rather than full \ac{CSI}.
It is noteworthy that in the general picture none of the self-supervised baselines significantly outperforms the purely supervised method WiT. 
As the test environments were unseen during pretraining, this could indicate a low degree of generalization, occurring when the model overly adapts to the pretraining distribution. 
The model-induced bias in SpaRTran appears to make it less susceptible to this effect, improving its generalization capabilities.

\subsection{Ablations}%
\label{subsec:eval-ablation}

Figure~\ref{fig:cdf_ablation} shows localization accuracy for the following ablations: no pretraining (randomly initialized backbone; black), no sparsity-inducing gating during pretraining (blue), and two regular pretraining variants with a learned dictionary (green) and a fixed dictionary based on the theoretical channel model (red).
Both regular pretraining cases achieve a very high accuracy of CE90~$\leq 0.145$. 
The learned dictionary incurs a minor CE90 degradation of \qty{0.003}{m} versus the fixed dictionary, so it remains competitive and is suitable when system configurations are unknown. SpaRTran reduces CE90 by \qty{66.5}{\%} relative to a randomly initialized backbone, demonstrating its effectiveness. Removing the sparsity-inducing gating worsens performance relative to the random initialization, underscoring the critical role of the sparsity assumption.

The parameter $\lambda$ balances a tradeoff between improved generalization with stronger sparsity and  
avoiding a systematic underestimation of the nonzero magnitudes (shrinkage) when $\lambda$ is too large~\citep{wright_addressing_2024}. 
The dictionary size $N$ governs a second tradeoff: smaller dictionaries yield less coherent basis functions, simplifying identification of contributing basis functions, while larger dictionaries improve reconstruction fidelity~\citep{donoho_uncertainty_2001}.

Figure~\ref{fig:hist_activations} shows the distribution of number of activations per atom of a learned dictionary with $N = 128$ dependent on $\lambda$. 
In general, higher values of $\lambda$ (strong sparsity penalty) lead to a less spread out histogram, i.e., the atoms are activated with equal frequency, indicating an effective diversity of the learned dictionary. 
Note that when $\lambda$ is very small ($\lambda = 0.001$), some atoms dominate the representation, being activated constantly while others are stalled, an effect akin to mode collapse.  

\section{Conclusion}%
\label{sec:conclusion}%

We presented SpaRTran, a hybrid model- and data-driven unsupervised method for learning task-agnostic radio channel representations. 
Our unsupervised objective aims to find a signal transformation that preconditions a sparse signal decomposition, i.e. reduces data ambiguity.   
This induces a strong inductive bias towards more general solutions. Furthermore, we use a gated \ac{SAE} that integrates a channel model inspired by compressed sensing, reflecting the sparse model in the architectural design.
Unlike existing methods, SpaRTran operates on individual radio links rather than full \ac{CSI} matrices which decouples the model from specific system configurations, taking a step towards large generic radio foundation models. 
We presented results that show an improvement in generalization and overall performance over state-of-the-art approaches, achieving up to 28\% reduction in positioning error and a 26 percentage point increase in the top-1 accuracy on codebook selection for beamforming.

\section{Impact Statement}
This paper presents work whose goal is to advance the field of machine learning. There are many potential societal consequences of our work, none of which we feel must be specifically highlighted here.


\bibliography{references}
\bibliographystyle{icml2026}

\newpage
\appendix
\onecolumn

\section{Proofs}

\begin{proof}[Proof of Theorem~\ref{thm:OperatorBound}]
    Let
    $$u := O[f]\in\mathcal{H}$$
    then, by definition of $\|\cdot\|_{A_1(\mathcal{H})}$ we have for all $\varepsilon>0$ a finite set of coefficients $c_i$ such that
    $$u=\sum_{i\in I} c_i\psi, \;\; \sum_{i\in I}|c_i|\leq \|u\|_{A_1(\mathcal{H})} + \varepsilon = \|h\|_{A_1^O}+\varepsilon$$
    Applying ERC to $u$ we get
    $$\sigma_n(u) = \min_{J\subset [N],|J|\leq n}\left\| u - \sum_{j\in J}a_j\psi_j \right\| \leq C\frac{\|u\|_{A_1(\mathcal{H})}}{\sqrt{n}} \leq C\frac{\|h\|_{A_1^O}+\varepsilon}{\sqrt{n}}$$
    by definition, we have a 
    $$g_n = \sum_{j\in J_n} a_j\psi_j, \;\; |J_n| = n$$
    such that 
    $$\|u-g_n\|_{\mathcal{H}} \leq C\frac{\|h\|_{A^O_1} + \varepsilon}{\sqrt{n}}$$
    Finally, set $O^{-1}[g_n] = \tilde h_n$ to obtain
    $$\|h - \tilde h_n\|_{\mathcal{H}} = \|O^{-1}[u - g_n]\|_{\mathcal{H}} \leq \|O^{-1}\|\|[u - g_n]\|_{\mathcal{H}} \leq \|O^{-1}\|C\frac{\|h\|_{A_1^O}+\varepsilon}{\sqrt{n}}$$
    Now let $\varepsilon\to 0$ and we obtain our bound
    $$\|h - \tilde h_n\|_{\mathcal{H}} \leq\|O^{-1}\|C\frac{\|h\|_{A_1^O}}{\sqrt{n}}$$
\end{proof}

\begin{proof}[Proof of Theorem~\ref{thm:DiagonalImprovement}]
Let $O$ be a diagonal operator with weights $w_i$ such that $O[h] = \sum w_i c_i\psi_i$.
Let $w_i$ be defined as 
\[w_i = \begin{cases} c & j\in S \\ 1 & j\notin S\end{cases}\]
with $c>1$.
Operators with this weighting have inverse norm $\|O^{-1}\| = \max(1/c, 1) = 1$ and 
$$\max_{0\leq i \leq K}\|h_i\|_{A^O_1} = \max_{0\leq i \leq K}\sum_{j\in S}|a_{i,j}| + \frac{1}{c} \sum_{j\in S}|a_{i,j}| = B + \frac{1}{c}\max_{0\leq i \leq K}\sum_{j\in S}|a_{i,j}|$$
Since, by assumption $B<R$ we may chose any $c$ large enough such that the residual
$$\frac{R_S}{c} < R - B$$
Specifically, choose
$$c>\frac{R_S}{R-B}$$
then
$$\max_{0\leq i \leq K}\|h_i\|_{A^O_1} = B + \frac{R_S}{c} < B + (R-B) = R$$
In short, the resulting rate is (nonasymptotically) better than the original one.
\end{proof}

\begin{proof}[Proof of Corollary~\ref{cor:LowRankImprovement}]
    Let $M_\text{diag}$ be the advantage gained by a diagonal scaler.
    First apply an orthonormal transform $V$ projecting onto $s$ axes (via SVD or polar decomposition), then apply \cref{thm:DiagonalImprovement}.
    This only adds the burden of back-projecting from $V$ space to the original space, i.e. 
    $$\max_i\|h_i-h_{i,n}\|\leq \|V^{-1}\| C\,M_{\text{diag}}/\sqrt n$$
    which due to orthonormality yields
    $$\max_i\|h_i-h_{i,n}\|\leq C\,M_{\text{diag}}/\sqrt n$$
\end{proof}

\section{Example Preconditioning Operator}\label{app:examplePrecond}
Assume one wants to solve a Poisson differential equation using the Spectral method on the basis of Legendre Polynomials.
Let's define the Poisson equation $-\Delta u(x) = f(x)$ for ground truth $f(x) = \frac{1}{\sqrt{1-x^2}}$ over $x\in(-1,1)$ with $u(-1)=u(1)=0$.
Approximating $f$ with Legendre polynomials has slow decay for spectral methods due to the endpoint singularities.
Preconditioning with Green's operator
$$O:=\left(-\frac{d^2}{dx^2}\right)^{-1}$$
Then
$$O[f]=u$$
is smooth over $(-1,1)$ and in fact has an analytic interior.
Specifically Legendre coefficients in $f$ decay on the order of $n^{-1}$ due to the singularity, while coefficients in $u$ decay exponentially.
For us this would mean
$$\|f\|_{A_1} := \inf\left\{ \sum_i |c_i| | f = \sum_i c_i\psi_i\right\}$$
while
$$\|f\|_{A^O_1} = \|O[f]\|_{A_1} = \|u\|_{A_1}$$
by the convergence rate we get
$$\|f\|_{A_1^O} << \|f\|_{A_1}$$
The norm
$$\|O^{-1}\| = \|-\frac{d^2}{dx^2}\|$$
over $H^2\cap H^1_0$ - The second order Sobolev space cut with the first-order subspace which vanishes on the boundary -  is a fixed constant.

\section{Experimental Setup: Baselines \& Datasets}
\label{app:exp_details}

\subsection{Baselines}%
\label{sec:exp:baselines}%

Due to architectural differences and different data handling between the compared methods (for example, SpaRTran uses single links between antennas, whereas the baseline methods use full CSI) it is unfeasible to determine a common training length in terms of epochs or steps.  
Hence, to ensure a fair comparison, we fix the time for pretraining on a single Tesla V100 SXM2 GPU to \qty{12}{\hour} and finetune all methods until convergence.    

\begin{table}[h] \renewcommand{\arraystretch}{0.8} 
\centering 
{\setlength{\tabcolsep}{0.8em}}
\caption{Comparison of transformer hyperparameter. } 
\label{tab:transformer} 
{\small
\begin{tabular}{@{}rrrrrrr@{}} 
\toprule
Method & $N_{latent}$ & $N_{hidden}$ & $N_{heads}$ & $N_{blocks}$ & \#param\\ \midrule
Masking & 512 & 1024 & 8 & 3 & 8.7 M\\ 
SWiT & 384 & 384 & 1 & 1 & 4.0 M\\  
LWM & 64 & 256 & 1 & 12 & 1.3 M\\ 
Contra-WiMAE & 64 & 128 & 8 & 20 & 587 K\\  
SpaRTran (ours) & 512 & 1024 & 8 & 1 & 2.6 M\\ 
\bottomrule
\end{tabular}
}
\end{table}


\textbf{SWiT}: \citet{salihu_self-supervised_2024} propose a joint embedding-based approach~\cite{grill_bootstrap_2020} called \acf{SWiT}, that predicts the output of a momentum encoder, 
given different augmented views of the same input signal. The aim is to learn representations that are invariant to six randomly selected augmentations, diversifying the views. 

\textbf{WiT}: \citet{salihu_attention_2022} employs a compact \ac{TF} model consisting of a single encoder block with single-head attention that is trained end-to-end in a supervised manner.

\textbf{Masking}: \citet{ott_radio_2024} introduce a predictive objective for learning \ac{FP} representations, in which masked portions of the input signal are reconstructed.
During training, up to 50\% of the input fingerprint is removed, forcing the model to learn spatiotemporal correlations between the \acp{MPC}.

\textbf{LWM}: \citet{alikhani_large_2024} also use a masking strategy to pretrain the \ac{TF} model. Here multiple patches of 15 time steps, of the input signal are gathered to be represented to the network as a token.  
They randomly select 15\% of patches and, for those, replace 80\% with a uniform MASK vector, 10\% with random noise, and leave 10\% unchanged.

\textbf{Contra-WiMAE}:~\citet{guler_multi-task_2025} combine a masked autoencoder with a contrastive pretraining objective. They generate positive samples by applying additive white gaussian noise to a \ac{CSI} sample while treating
all other channels in the batch as negative samples. The goal is to encourage the model to learn structural patterns via the masking objective and discriminative features via the contrastive objective.

\subsection{Datasets}%
\label{sec:datasets}%

\begin{table}[h] \renewcommand{\arraystretch}{0.8} 
\centering 
{\setlength{\tabcolsep}{0.8em}}
\caption{Comparison of dataset configurations.} 
\label{tab:datasets} 
{\small
\begin{tabular}{@{}rrrrrr@{}} 
\toprule
Dataset & $f_c$ [GHz] & $W$ [MHz] & $N_r$ & Area \\ \midrule
KUL & 2.61 & 20 & 64 & $3~\text{m}\times3~\text{m}$\\  
FH-IIS & 3.7 & 100 & 6 & $40~\text{m} \times 30~\text{m}$\\ 
DeepMIMO & 3.5 & 20 & 32 & $>$\qty{1}{km^2} \\
\bottomrule
\end{tabular}
}
\end{table}


\textbf{KUL Dataset}~\cite{bast_csi-based_2020}: The dataset comprises four antenna configurations: distributed antennas (DIS-LoS), a uniform linear array (ULA-LoS), and a uniform rectangular array under both \ac{LoS} (URA-LoS) and \ac{nLoS} conditions (URA-nLoS). 
Each configuration contains 252,004 \ac{CSI} samples with recording positions arranged in a grid-pattern with 5~mm distance. 
The channels are measured at 20~MHz bandwidth.
We split the dataset randomly into 70~\% for training, 10~\% for validation, and 20~\% for testing.   

\textbf{FH-IIS Dataset}~\cite{stahlke_uncertainty-based_2024}: This dataset contains \ac{CIR} fingerprints collected using a 5G-FR1-compatible software-defined radio system (DL-PRS reference signal) with a bandwidth of 100~MHz. 
The recorded environment resembles an industrial hall featuring tall metal shelves, and a narrow corridor with large walls that introduce signal blockages and complex multipath propagation. 
The \ac{CSI} is captured along a random walking trajectory of a person at a sampling rate of 6.6~\si{Hz}.
Six base-stations are distributed along the perimeter of the localization area. 
The split sizes are as follows: 566,589 training, 141,639 validation and 593,022 test samples.

\textbf{DeepMIMO Dataset}~\cite{alkhateeb_deepmimo_2019}: 
This is a synthetic dataset generated by a ray-tracing engine.
We configure it to simulate a uniform linear array with 32 antennas at 20~MHz bandwidth, and 3.5~Ghz carrier frequency. 
We split the dataset into a pretraining and finetuning subset including 15 scenarios (\textit{O1, Boston5G, ASU Campus, New York, Los Angeles, Chicago, Houston, Phoenix, Philadelphia, Miami, Dallas, San Francisco, Austin, Columbus, Seattle}) and 6 scenarios (\textit{Denver, Fort Worth, Oklahoma, Indianapolis, Santa Clara, San Diego}) respectively.  
This yields the following split sizes: pretraining — 540,272 training and 135,075 validation samples; finetuning — 10,385 training, 1,481 validation, and 1,481 test samples. 

We normalize the received signal strength of the FH-IIS and KUL datasets using a global normalization factor. This preserves the relative signal strength differences within the channel measurements, a crucial property for the downstream task of wireless localization. 
In contrast, DeepMIMO is normalized per CSI matrix to reduce sensitivity to outliers from large peaks that occur at short transmitter–receiver distances (signal strength scales roughly with $1/d^2$).

\section{Hyperparameter Analysis}
\label{sec:hparams}


\begin{table*}[t!]
\renewcommand{\arraystretch}{1.0}
\caption{Ablation Architectural Hyperparameter}
\label{tab:hparams}
\vskip 0.15in
\centering
{\small
\begin{tabular}{@{}cccc|rr|cc@{}}
\toprule
& \multirow{2}{*}{$N_{latent}$} & \multirow{2}{*}{$N_{blocks}$} & \multirow{2}{*}{$N_{blocks, head}$} & \multirow{2}{*}{MAE} & \multirow{2}{*}{CE90} & \multicolumn{2}{c}{\#param ($\times 10^6$)} \\
& & & & & & backbone & head \\
\midrule
base & 512 & 1 & 4 & 0.093 & 0.177 & 2.61 & 1.40 \\
\midrule
\multirow{3}{*}{(a)} & & & 3 & 0.128 & 0.240 &  & 0.26 \\
& & & 2 & 0.159 & 0.303 &  &  0.05\\
& & & 5 & 0.076 & 0.149 &  & 6.65 \\ 
\midrule
\multirow{4}{*}{(b)} & 256 &  &  & 0.089 & 0.179 & 0.83 & \\
& 1024 &  &  & 0.211 & 0.409 & 9.31 &  \\
& & 2 &  & 0.064 & 0.126 & 4.71 &  \\
& & 3 &  & 0.113 & 0.223 & 6.81 &  \\
\bottomrule
\end{tabular}
}
\end{table*}

Table~\ref{tab:hparams} shows changes in localization accuracy in dependency of alterations in hyperparameters from the base model---i.e. architectural variations---and the corresponding number of parameters. In rows (a) we vary the depth of the finetuning head $N_{blocks, head}$ while the other hyperparameters remain fixed (see Sec.~\ref{sec:finetuning} for head network description). As the representation module handles only single links between antennas the spatial correlations between links in the full~\ac{CSI} must be learned solely by the head. This contrasts existing works which are less dependent on the expressivity of their head network as the spatial correlations are implicitly learned during pretraining.   
This dependency is clearly reflected in our results as the \ac{MAE} gets consistently smaller with an increase in $N_{blocks, head}$ from \qty{1.159}{m} given $N_{blocks, head} = 2$ to \qty{0.076}{m} given $N_{blocks, head} = 5$. It should be noted that the number of parameters in the network increase drastically with larger $N_{blocks, head}$. This is why we choose $N_{blocks, head}$ to be the base configuration providing good results with a reasonable amount of parameters for the head network~($1.4$ million). 
Rows (b) provide insight into varying configurations of the representation module, i.e., the transformer encoder network.  
We alter the latent dimension $N_{latent}$ and the depth $N_{blocks}$ of the network. 
While a smaller latent of $256$ has no relevant effect on the localization accuracy, doubling the size of it up to $1024$ dimensions significantly increases the \ac{MAE} by 227\% from \qty{0.093}{m} to \qty{0.211}{m} and the \ac{CE90} by 231\% from \qty{0.177}{m} to \qty{0.409}{m}. 
As this comes with a radical increase in parameters up to~$\approx 9$ million we think that the reduction in accuracy comes from overfitting the training data.   
The results for varying depths $N_{blocks}$ of the transformer encoder again indicate a tradeoff between overfitting and expressivity which is well 
balanced at $N_{blocks} = 2$. Nevertheless, we use $N_{blocks} = 1$ for our base configuration to reduce the amount of parameters making our model more lightweight than Masking and SWiT (see Table~\ref{tab:transformer}).      

\section{LLM disclosure}

We used large language models (LLMs) to assist with drafting and polishing the manuscript and for literature retrieval/discovery (e.g., to identify related work). Specifically, OpenAI GPT-4o and OpenAI GPT-5 were used to rephrase sentences for clarity and conciseness as well as search and summarize candidate related works.
All LLM-generated text and suggested citations were reviewed, edited, and validated by the authors; final responsibility for the content, interpretations, and citation accuracy rests with the authors.

\end{document}